\newtheorem{theorem}{Theorem}[section]
\newtheorem{lemma}{Lemma}[section]
\newtheorem{example}{Example}[section]
\begin{document}

\title{Characterization of Order Statistics in Two Runs Using Conditional Expectation}

\author{M. R. Kazemi$^{1}$, A. A. jafari$^{2}$\thanks{Corresponding author. A. A. Jafari. Email: aajafari@yazd.ac.ir}
 \\
{\small $^1$Department of Statistics, Fasa University, Fasa, Iran}\\
{\small $^{2}$Department of Statistics, Yazd University, Yazd, Iran} }
\date{}
\maketitle

\begin{abstract}
The runs test is a well-known test that is used for checking independence between elements of a sample data sequence. Some of runs tests are based on the longest run and others based on the total runs. In this paper, we consider order statistics of two runs statistics, and obtain their probability mass functions. In addition, the means and variances of the order statistics are derived using traditional conditional expectation.
\end{abstract}

\section{Introduction}

One can use a test based upon the notion of runs to test of the hypothesis $H_0:F(z)=G(z)$, for all $z$
(see for example Hogg and Craig \cite{ho-cr-78}, pages 322-326),
 where $F$ and $G$ are two univariate discrete distribution functions. This notation is defined as follows. Let $x_{1},x_{2},...,x_{n_1}$ and $y_{1},y_{2},...,y_{n_2}$ be two
random samples of sizes $n_{1}$ and $n_{2}$ from $F$ and $G$, respectively. By combining these two samples, we have a new sample of size $n=n_{1}+n_{2}$. These values are arranged in order from smallest to largest, and a run constitutes when one or more values of $x$ or $y$ occur together.

The runs test is one of the easiest tests for checking the randomness hypothesis for a two-valued data sequence. More precisely, it can be used to test the hypothesis that the
elements of the sequence are mutually independent. The theory of runs
has been studied in several books such as
Gibbons and Chakraborty \cite{gi-ch-03},
Hogg and Craig \cite{ho-cr-78},
Randles and Wolfe \cite{ra-wo-79}
 and
Govindarajulu \cite{govindarajulu-07},
 and has various applications, for example in reliability, quality control, and so on. Some
runs' tests are based on the longest run length, while others are based on the
number of total runs. Mood \cite{mood-40}
derived the distribution of runs of given length of fixed number of elements of
two or more kinds. \cite{wa-wo-40} 
used runs statistic to test whether two samples are from the same
population, and  Wolfowitz \cite{wolfowitz-44} 
derived an asymptotic distribution of the nonparametric runs statistic.

Asano \cite{asano-65} 
found the distribution of runs statistic and suggested a method for testing
whether two samples observed on a circle are drawn from the same
distribution. Schwager \cite{schwager-83}  
considered $n$ trials with $\nu \geq 2$ outcomes, and computed the
probability occurrence of a given success run, success-failure runs, and
multiple runs in sequences of Markov dependent trials. Godbole and  Gomowiez \cite{go-go-92}
found the exact and approximate distributions of the total number of runs in
sequences of $n$ Bernoulli trials.
Lou \cite{lou-96}
studied the the distribution of runs using a method of finite Markov chain
imbedding, and found the exact joint and conditional distribution of runs.

To analyze runs and find their distributions, some methods use the
traditional combinatorial approach, and some use properties based in Markov
chain. For a complete review on these approaches, refer to Balakrishnan and Koutras \cite{ba-ko-02}.
An efficient method to derive the mean and variance of the total
number of runs based on a conditional approach was utilized Marshall \cite{marshall-70}. 



In this paper, we consider the order statistics of two runs of $x$ values and of $y$ values, i.e. maximum of two runs and minimum of two runs, and we
find their probability mass functions (pmf's), means, and variances using the conditional approach.
This paper is organised as follows: in Section \ref{sec.per}, some properties of runs are proposed. The
pms's of order statistics and their properties are given in Section \ref{sec.ord}. For illustration purpose, two examples are studied in Section \ref{sec.exm}.

\section{Preliminary study}
\label{sec.per}
Let $x_{1},x_{2},...,x_{n_{1}}$ and $y_{1},y_{2},...,y_{n_{2}}$ be two
random samples of sizes $n_{1}$ and $n_{2}$, respectively selected from a same population. The total number of runs, total number of runs of $x$'s,
total number of runs of $y$'s are denoted by $R$, $R_{1}$ and $R_{2}$,
respectively. The following two properties are easy to understand (see for example Hogg and Craig 
\cite{ho-cr-78}, page 324):

\noindent i) If the arrangement begins and finishes with $x$ values (or $y$ values), then $R_{1}=R_{2}\pm 1$.

\noindent ii) If the arrangement begins with $x$ values and finishes with $y$ values and vice versa, then $R_{1}=R_{2}$.

The $n_1$ objects of $x$'s values can be put into $r_1$ cells in $\binom{n_1-1}{r_1-1}$ ways and for each choice of this, $n_2$ objects of $y$'s values can be partitioned into $r_2$ parts in $\binom{n_2-1}{r_2-1}$ ways. If $r_1=r_2\pm 1$, a run of $y$'s or $x$'s values must come first. If $r_1=r_2$, the sequence can begin with a run of either type, so that the number of distinct arrangements have to be multiplied by 2. Also, we know that the total number of permutations of $n_1$objectsof $x$'s and $n_2$ objects of $y$'s values is
$\binom{n_1+n_2}{n_1}$. Therefore, the joint pmf of
nonparametric random variables $R_{1}$ and $R_{2}$ is
\begin{eqnarray*}
f_{R_{1},R_{2}}\left( r_{1},r_{2}\right) = \left\{
\begin{array}{ll}
v & \text{if }r_{1}=r_{2}\pm 1 \\
2v & \text{if }r_{1}=r_{2},%
\end{array}
\right.
\end{eqnarray*}
where $v=\frac{\binom{n_1-1}{r_1-1} \binom{n_2-1}{r_2-1}  }{\binom{n}{n_1} },$
and $n=n_1+n_2$.
Readers can see for example
Hogg and Craig
\cite{ho-cr-78} page 78, Govindarajulu \cite{govindarajulu-07} page 206, and Gibbons and Chakraborty \cite{gi-ch-03} page 79.

Using the following lemma, we can compute the $P\left( R_{1}>R_{2}\right)
$, $P\left( R_{1}<R_{2}\right)$ and $P\left( R_{1}=R_{2}\right)$.

\begin{lemma}
\label{lemma.P} Let $R_1$ and $R_2$ be the total number of runs of $x$ values and $y$ values, respectively. Then

\noindent a. $$P\left(R_1=R_2\right)=\dfrac{2n_1n_2}{n(n-1)}$$

\noindent b. $$P\left(R_1>R_2\right)=\dfrac{n_1(n_1-1)}{n\left(n-1\right)}$$

\noindent c. $$P\left(R_1<R_2\right)=\dfrac{n_2(n_2-1)}{n\left(n-1\right)}.$$
\end{lemma}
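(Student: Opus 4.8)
The plan is to work directly from the joint pmf $f_{R_{1},R_{2}}$ displayed above. By properties (i)--(ii), the event $\{R_{1}>R_{2}\}$ coincides with $\{R_{1}=R_{2}+1\}$, the event $\{R_{1}<R_{2}\}$ coincides with $\{R_{1}=R_{2}-1\}$, and on the diagonal $\{R_{1}=R_{2}\}$ the pmf carries the extra factor $2$. Hence each of the three probabilities is a one-parameter sum of $v=\binom{n_{1}-1}{r_{1}-1}\binom{n_{2}-1}{r_{2}-1}/\binom{n}{n_{1}}$ taken along the appropriate line, and the whole proof reduces to evaluating three binomial sums.

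For part (b) I would set $r_{1}=r_{2}+1$, so that
\begin{equation*}
P\left(R_{1}>R_{2}\right)=\frac{1}{\binom{n}{n_{1}}}\sum_{r_{2}\geq 1}\binom{n_{1}-1}{r_{2}}\binom{n_{2}-1}{r_{2}-1}.
\end{equation*}
Rewriting $\binom{n_{2}-1}{r_{2}-1}=\binom{n_{2}-1}{n_{2}-r_{2}}$ and applying the Vandermonde convolution $\sum_{k}\binom{a}{k}\binom{b}{m-k}=\binom{a+b}{m}$ collapses the sum to $\binom{n-2}{n_{2}}=\binom{n-2}{n_{1}-2}$. A short factorial cancellation in $\binom{n-2}{n_{1}-2}/\binom{n}{n_{1}}$ then gives $n_{1}(n_{1}-1)/[n(n-1)]$. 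Part (c) is entirely symmetric: interchanging the roles of the two samples (take $r_{2}=r_{1}+1$) yields $\binom{n-2}{n_{2}-2}/\binom{n}{n_{1}}=n_{2}(n_{2}-1)/[n(n-1)]$.

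For part (a), put $r_{1}=r_{2}=r$, so that
\begin{equation*}
P\left(R_{1}=R_{2}\right)=\frac{2}{\binom{n}{n_{1}}}\sum_{r\geq 1}\binom{n_{1}-1}{r-1}\binom{n_{2}-1}{r-1},
\end{equation*}
and Vandermonde again (after $\binom{n_{2}-1}{r-1}=\binom{n_{2}-1}{n_{2}-r}$) gives $\binom{n-2}{n_{2}-1}=\binom{n-2}{n_{1}-1}$, whence $P(R_{1}=R_{2})=2\binom{n-2}{n_{1}-1}/\binom{n}{n_{1}}=2n_{1}n_{2}/[n(n-1)]$. Alternatively, once (b) and (c) are established, (a) follows at once from $P(R_{1}=R_{2})=1-P(R_{1}>R_{2})-P(R_{1}<R_{2})$ together with the identity $(n_{1}+n_{2})^{2}-(n_{1}+n_{2})=n(n-1)$; this also serves as a useful consistency check.

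The only mildly delicate point is the bookkeeping with the summation ranges and the index shifts needed to cast each sum into Vandermonde form — in particular, checking that after the shift the upper index still belongs to a genuine binomial coefficient so that the convolution identity may be applied over all integers. Everything past that step is routine factorial manipulation.
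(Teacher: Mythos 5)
Your proposal is correct and follows essentially the same route as the paper: both sum the joint pmf of $(R_1,R_2)$ along the lines $r_1=r_2\pm1$ and $r_1=r_2$ and collapse the resulting binomial sums by the Vandermonde convolution (the paper simply writes the collapsed sum in Gamma-function form rather than naming the identity). The only cosmetic difference is that you spell out the Vandermonde step and note the complement-rule consistency check for part (a), which the paper omits.
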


\begin{proof}
\noindent a.
\begin{eqnarray*}
P\left(R_1=R_2\right)&=&\sum^{\min(n_1,n_2)}_{i=1}{P\left(R_1=i,R_2=i\right)}\\
&=&\frac{2}{\left( \begin{array}{c}
n_1+n_2 \\
n_1 \end{array}
\right)}\sum^{\min(n_1,n_2)}_{i=1}{\left( \begin{array}{c}
n_1-1 \\
i-1 \end{array}
\right)\left( \begin{array}{c}
n_2-1 \\
i-1 \end{array}
\right)}\\
&=&\frac{2}{\left( \begin{array}{c}
n_1+n_2 \\
n_1 \end{array}
\right)}\frac{\Gamma \left(n_1+n_2-1\right)}{\Gamma \left(n_1\right)\Gamma \left(n_2\right)}\\
&=&\frac{2n_1n_2}{n\left(n-1\right)}
\end{eqnarray*}

\noindent b.
\begin{eqnarray*}
P\left(R_1>R_2\right)&=&\sum^{n_1-1}_{i=1}{P\left(R_1=i+1,R_2=i\right)}\\
&=&\frac{1}{\left( \begin{array}{c}
n_1+n_2 \\
n_1 \end{array}
\right)}\sum^{n_1-1}_{i=1}{\left( \begin{array}{c}
n_1-1 \\
i \end{array}
\right)\left( \begin{array}{c}
n_2-1 \\
i-1 \end{array}
\right)}\\
&=&\frac{1}{\left( \begin{array}{c}
n_1+n_2 \\
n_1 \end{array}
\right)}\frac{\Gamma \left(n_1+n_2-1\right)}{\Gamma \left(n_1-1\right)\Gamma \left(n_2+1\right)}\\
&=&\frac{n_1(n_1-1)}{n\left(n-1\right)}.
\end{eqnarray*}

\noindent c. The proof of part c is  the same as part b.
\end{proof}

\section{The order statistics of runs}
\label{sec.ord}
In this section, we characterize the nonparametric statistics $R_{M}=\max\left(R_{1},R_{2}\right) $ and $R_{m}=\min\left(R_{1},R_{2}\right) $ by means of conditional expectation, and we derive their expectation values, variances, and covariance between them. It can be shown that the pmf of the nonparametric statistic $R_{M}$ has the following form:
\begin{eqnarray*}
P\left(R_M=t\right)&=&
P\left(R_M=t|R_1>R_2\right)P\left(R_1>R_2\right)\hspace{2.5cm}\\
&&+P\left(R_M=t|R_1<R_2\right)P\left(R_1<R_2\right)\\
&&+P\left(R_M=t|R_1=R_2\right)P\left(R_1=R_2\right)\\
&=&P\left(R_1=t,R_2=t-1\right)P\left(R_1>R_2\right)\\
&&+P\left(R_1=t-1,R_2=t\right)P\left(R_1<R_2\right)\\
&&+P\left(R_1=t,R_2=t\right)P\left(R_1=R_2\right)\\
&=&\frac{\left( \begin{array}{c}
n_1-1 \\
t-1 \end{array}
\right)\left( \begin{array}{c}
n_2-1 \\
t-2 \end{array}
\right)}{\left( \begin{array}{c}
n_1+n_2 \\
n_1 \end{array}
\right)}.\frac{n_1\left(n_1-1\right)}{n\left(n-1\right)}
\\
&&+\frac{\left( \begin{array}{c}
n_1-1 \\
t-2 \end{array}
\right)\left( \begin{array}{c}
n_2-1 \\
t-1 \end{array}
\right)}{\left( \begin{array}{c}
n_1+n_2 \\
n_1 \end{array}
\right)}.\frac{n_2\left(n_2-1\right)}{n\left(n-1\right)}\\
&&+\frac{\left( \begin{array}{c}
n_1-1 \\
t-1 \end{array}
\right)\left( \begin{array}{c}
n_2-1 \\
t-1 \end{array}
\right)}{\left( \begin{array}{c}
n_1+n_2 \\
n_1 \end{array}
\right)}.\frac{2n_1n_2}{n\left(n-1\right)},  \qquad t=1,2,\dots ,\max(n_1,n_2).
\end{eqnarray*}

Computing the mean and variance of $R_{M}$ by using the above
pmf requires tedious calculations. To
solve this problem, we make use of another method which is used by Marshall \cite{marshall-70}. 
We claim that our method is simple and has pedagogical advantages.

As known, the sample space $\Omega $ can be partitioned as $\Omega =\left\{
R_{1}<R_{2}\right\} \cup \left\{ R_{1}>R_{2}\right\} \cup \left\{
R_{1}=R_{2}\right\}.$ Therefore,
\begin{eqnarray}  \label{eq.ERM}
E\left( R_{M}\right)&=& E\left( R_{M}|R_{1}>R_{2}\right) P\left(R_{1}>R_{2}%
\right)  \notag \\
&&+E\left( R_{M}|R_{1}<R_{2}\right) P\left( R_{1}<R_{2}\right) \\
&&+E\left( R_{M}|R_{1}=R_{2}\right) P\left( R_{1}=R_{2}\right).  \notag
\end{eqnarray}

To simplify (\ref{eq.ERM}), we need to compute $E\left(R_{M}|R_{1}>R_{2}%
\right)$, $E\left( R_{M}|R_{1}<R_{2}\right)$, and \linebreak
 $E\left( R_{M}|R_{1}=R_{2}%
\right)$. We can compute these values using the following lemma.

\begin{lemma}
\label{lemma.E} Let $R_1$ and $R_2$ be the total number of runs of $x$ values and $y$ values, respectively. Then

\noindent i. $$E\left(R_M\left|R_1>R_2\right.\right)=2+\frac{\left(n_1-2\right)(n_2-1)}{n-2},$$

\noindent ii. $$E\left(R_M\left|R_1<R_2\right.\right)=2+\frac{(n_1-1)\left(n_2-2\right)}{n-2},$$

\noindent iii. $$E\left(R_M\left|R_1=R_2\right.\right)=1+
\frac{\left(n_1-1\right)(n_2-1)}{n-2},$$\\
where $n=n_1+n_2$.

\end{lemma}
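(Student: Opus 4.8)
The plan is to compute each conditional expectation directly from the joint pmf $f_{R_1,R_2}$ established in Section~\ref{sec.per}, exploiting the fact that on each of the three events $\{R_1>R_2\}$, $\{R_1<R_2\}$, $\{R_1=R_2\}$ the statistic $R_M$ reduces to a simple function of a single summation index. On $\{R_1>R_2\}$ property (i) forces $R_1=R_2+1$, so $R_M=R_1=i+1$ whenever $R_2=i$; on $\{R_1<R_2\}$ we get $R_M=R_2=i+1$ when $R_1=i$; and on $\{R_1=R_2\}$ we simply have $R_M=R_1=R_2=i$. Thus each conditional expectation is a ratio of two sums of products of binomial coefficients, and the denominators $P(R_1>R_2)$, $P(R_1<R_2)$, $P(R_1=R_2)$ are already available in closed form from Lemma~\ref{lemma.P}.

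\medskip

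\textbf{Step 1 (set up the conditional sums).} For part~(i) I would write
\begin{eqnarray*}
E\left(R_M\mid R_1>R_2\right)
=\frac{1}{P(R_1>R_2)}\cdot\frac{1}{\binom{n}{n_1}}\sum_{i=1}^{n_1-1}(i+1)\binom{n_1-1}{i}\binom{n_2-1}{i-1},
\end{eqnarray*}
and analogously for parts~(ii) and~(iii), the latter carrying the extra factor $2$ and running over $i=1,\dots,\min(n_1,n_2)$ with summand $i\binom{n_1-1}{i-1}\binom{n_2-1}{i-1}$.

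\medskip

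\textbf{Step 2 (evaluate the binomial sums).} The key identities I would use are the Vandermonde-type convolution $\sum_i \binom{a}{i}\binom{b}{k-i}=\binom{a+b}{k}$ together with the absorption identity $i\binom{m}{i}=m\binom{m-1}{i-1}$, applied to peel off the linear factor $i$ (or $i+1$) before resumming. Concretely, splitting $(i+1)=i+1$ in part~(i), the constant term gives $\binom{n-2}{n_1-1}$ by Vandermonde (after reindexing), and the term with $i$ gives, via absorption on $\binom{n_1-1}{i}$, a multiple of $\binom{n-3}{n_1-2}$; expressing both in terms of $\binom{n}{n_1}$ and dividing by the Lemma~\ref{lemma.P} value of $P(R_1>R_2)=\frac{n_1(n_1-1)}{n(n-1)}$ collapses everything to $2+\frac{(n_1-2)(n_2-1)}{n-2}$. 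Parts~(ii) and~(iii) are handled the same way, with (ii) obtained from (i) by the symmetry $n_1\leftrightarrow n_2$ (note $R_M$ is symmetric and the roles of the two samples swap), and (iii) using the single absorption $i\binom{n_1-1}{i-1}$ before Vandermonde.

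\medskip

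\textbf{Main obstacle.} The only real work is the bookkeeping in Step~2: correctly reindexing the shifted sums so that Vandermonde's identity applies with the right upper parameters, and then simplifying the resulting ratios of Gamma functions (or falling factorials) down to the stated rational expressions in $n_1,n_2,n$. There is no conceptual difficulty — once the absorption/Vandermonde pair is invoked, each answer is forced — but it is easy to be off by one in the binomial arguments, so I would double-check each case against a small numerical instance (e.g. $n_1=n_2=2$). I would also remark that part~(ii) need not be computed from scratch: interchanging $x$ and $y$ swaps $R_1$ and $R_2$ and leaves $R_M$ unchanged, which maps the event $\{R_1>R_2\}$ to $\{R_1<R_2\}$ and $n_1$ to $n_2$, immediately yielding (ii) from (i).
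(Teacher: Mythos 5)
Your proposal is correct, but it takes a genuinely different route from the paper. The paper deliberately avoids the joint pmf: conditionally on $\{R_1>R_2\}$ it represents $R_M=R_1$ as $2+\sum_{i=2}^{n_2}W_i$, where $W_i$ indicates that an $x$-run falls in the gap between consecutive $y$'s, computes $E(W_i)=\frac{n_1-2}{n-2}$ by a stars-and-bars (nuts-in-boxes) count, and sums — this Marshall-style indicator argument is precisely the ``pedagogical'' point of the article, and it also feeds directly into the variance computations of Lemma 3 via $Var(W_i)$ and $Cov(W_i,W_j)$. You instead compute each conditional expectation directly from $f_{R_1,R_2}$, using $|R_1-R_2|\le 1$ to identify $R_M$ on each event, and then absorption plus Vandermonde; this is exactly the ``tedious'' computation the paper sets out to bypass, but it is a perfectly valid proof and I checked it closes: for (i) one gets $E(R_M\mid R_1>R_2)=1+\frac{(n_1-1)n_2}{n-2}$, which equals $2+\frac{(n_1-2)(n_2-1)}{n-2}$, and (iii) comes out as $1+\frac{(n_1-1)(n_2-1)}{n-2}$ after splitting $i=(i-1)+1$. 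Two small bookkeeping notes of the kind you anticipated: the constant-term Vandermonde sum in (i) is $\sum_i\binom{n_1-1}{i}\binom{n_2-1}{i-1}=\binom{n-2}{n_1-2}$, not $\binom{n-2}{n_1-1}$ (conveniently it cancels exactly against $P(R_1>R_2)=\binom{n-2}{n_1-2}/\binom{n}{n_1}$), and in (iii) the clean absorption is on $i-1$ rather than on $i$ directly. Your symmetry argument deducing (ii) from (i) by swapping the samples is sound and matches the paper's own ``same as part i'' shortcut; what your route gives up is the reusable indicator machinery, while what it buys is a self-contained derivation from the classical joint pmf without the diagrammatic conditioning argument.
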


\begin{proof}
\noindent i. Note that when $R_{1}>R_{2}$,
the arrangement begins with some of $x$ observations and finishes with other $x$ observations.
Therefore, we can use the following diagram:

Allow  $y$ values to be laid out in a row, and place $x$ values in the spaces between those $y$ values, and let our diagram begin
and finish with some of $x$ values. Then, the total number of runs of
$x$ values is distributed as follows
\begin{equation}
\overset{\overset{W_{1}}{\uparrow }}{-},y_{1},\overset{\overset{W_{2}}{%
\uparrow }}{-},y_{2},\overset{\overset{W_{3}}{\uparrow }}{-}%
,y_{3},...,y_{n_2-1},\overset{\overset{W_{n_2}}{\uparrow }}{-},y_{n_2},\overset{%
\overset{Wn_2+1}{\uparrow }}{-}\tag{D-1}
\end{equation}
where $W_{i},i=2,3,...,n_2$ has the following form
\begin{equation*}
W_{i}=\left\{
\begin{array}{l}
1, \\
0,%
\end{array}
\begin{array}{l}
\text{a run occurs in place}\ i \\
\text{other wise,}%
\end{array}\right.
\end{equation*}%
and $W_{1}$ and $W_{n_2+1}$ are equal to 1. Using the above discussion, we
can write
\begin{equation*}
\left( R_{1}|R_{1}>R_{2}\right) =2+\sum_{i=2}^{n_2}W_{i}.
\end{equation*}%
 Therefore,
\begin{eqnarray*}
E\left( R_{M}|R_{1}>R_{2}\right) =2+\sum_{i=2}^{n_2}E\left( W_{i}\right)
=2+\sum_{i=2}^{n_2}P\left( \text{a run occurs in place }i\right) .
\end{eqnarray*}

The event $\left\{\text{a run occurs in place }i\right\} $ is equivalent to
the case that we have $n_2+1$ boxes and we want to put $n_1-2$ nuts into these $n_2+1$ boxes where the box $i$
 should consist of
at least one $x$ (nut). Then,
\begin{equation*}
P\left( \text{a run occurs in place} \ i\right) =\frac{\left(
\begin{array}{c}
n-3 \\
n_1-3%
\end{array}
\right) }{\left(
\begin{array}{c}
n-2 \\
n_1-2%
\end{array}
\right) }=\frac{n_1-2}{n-2}.
\end{equation*}%
So we have
\begin{equation*}
E\left( R_{M}|R_{1}>R_{2}\right) =2+\frac{\left( n_1-2\right) \left(
n_2-1\right) }{\left( n-2\right) }.
\end{equation*}

\noindent ii. The proof of this part is the same as part i.

\noindent iii. The proof of this part also is the same as
part (i) with only some changes in the diagram. When the event $\left\{R_{1}=R_{2}\right\} $
occurs, it means that in the permutation of $x$ values and $y$ values, permutation begins with some $x$ values and ends with
some $y$ values and vice versa as below:
\begin{equation}
\overset{\overset{W_{1}}{\uparrow }}{-},y_{1},\overset{\overset{W_{2}}{%
\uparrow }}{-},y_{2},\overset{\overset{W_{3}}{\uparrow }}{-}%
,y_{3},...,y_{n_2-1},\overset{\overset{W_{n_2}}{\uparrow }}{-},y_{n_2}\ \ \ \ \ \
\ \text{or \ \ \ \ \ \ \ \ }\overset{\overset{W_{1}^*}{\uparrow }}{-},x_{1},%
\overset{\overset{W_{2}^*}{\uparrow }}{-},x_{2},\overset{\overset{W_{3}^*}{%
\uparrow }}{-},x_{3},...,x_{n_1-1},\overset{\overset{W_{n_1}^*}{\uparrow }}{-}%
,x_{n_1}\tag{D-2}
\end{equation}%
where $W_{i},i=2,3,...,n_2$ and $W_{j}^*,j=2,3,...,n_1$ have the following form
\bigskip
\begin{equation*}
W_{i}=\left\{%
\begin{array}{l}
1, \\
0,%
\end{array}%
\begin{array}{l}
\text{a run occurs in place}\ i \\
\text{other wise,}%
\end{array}\right.
\qquad
W_{j}^*=\left\{%
\begin{array}{l}
1, \\
0,%
\end{array}%
\begin{array}{l}
\text{a run occurs in place}\ j \\
\text{other wise,}%
\end{array}\right.
\end{equation*}%
and $W_{1}$ and $W_{1}^*$ are equal $1$. By using the above discussion
and since
\[R_M=\frac{1}{2}\left\{R_1+R_2+\left|R_1-R_2\right|\right\},\]
we have
\begin{equation*}
\left( R_{M}|R_{1}=R_{2}\right) =1+\frac{1}{2}\sum_{i=2}^{n_2}W_{i}+\frac{1}{2}\sum_{i=2}^{n_1}W_{i}^*=1+\sum_{i=2}^{n_2}W_{i},
\end{equation*}%
Therefore,
\begin{equation*}
E\left( R_{M}|R_{1}=R_{2}\right) =
1+\sum_{i=2}^{n_2}E(W_{i})
=1+\frac{\left(n_1-1\right)\left(n_2-1\right)}{n-2},
\end{equation*}%
and the proof is completed.
\end{proof}

\begin{lemma}
\label{lemma.V} Let $R_1$ and $R_2$ be the total numbers of runs of $x$ values and $y$ values, respectively. Then,

\noindent i.
\[Var\left(R_M\mathrel{\left|\vphantom{R_M R_1>R_2}\right.\kern-\nulldelimiterspace}R_1>R_2\right)=\frac{n_2\left(n_2-1\right)\left(n_1-2\right)\left(n_1-1\right)}{{\left(n-2\right)}^2\left(n-3\right)},\ \ \ n>3,\]
\noindent ii.
\[Var\left(R_M\mathrel{\left|\vphantom{R_M R_1<R_2}\right.\kern-\nulldelimiterspace}R_1<R_2\right)=\frac{n_1\left(n_1-1\right)\left(n_2-2\right)\left(n_2-1\right)}{{\left(n-2\right)}^2\left(n-3\right)},\ \ \ n>3,\]
\noindent iii.
\[Var\left(R_M\mathrel{\left|\vphantom{R_M R_1=R_2}\right.\kern-\nulldelimiterspace}R_1=R_2\right)=\frac{{\left(n_2-1\right)}^2{\left(n_1-1\right)}^2}{{\left(n-2\right)}^2\left(n-3\right)},\ \ \ n>3.\]
\end{lemma}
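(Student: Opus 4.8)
The plan is to mirror the conditional-expectation device from Lemma \ref{lemma.E}, but now applied to the second moment. For part (i), condition on $R_1>R_2$, so the arrangement begins and ends with $x$-runs, and $R_M=R_1=2+\sum_{i=2}^{n_2}W_i$, where the $W_i$ are the indicator variables already introduced in diagram (D-1). Then $Var(R_M\,|\,R_1>R_2)=Var\bigl(\sum_{i=2}^{n_2}W_i\bigr)=\sum_{i=2}^{n_2}Var(W_i)+\sum_{i\neq j}Cov(W_i,W_j)$. From Lemma \ref{lemma.E} we already know $p:=P(W_i=1)=\frac{n_1-2}{n-2}$, so $Var(W_i)=p(1-p)$. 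The work is therefore to compute the pairwise joint probability $P(W_i=1,W_j=1)$ for $i\neq j$, i.e.\ the probability that two prescribed boxes (among the $n_2+1$ boxes formed by the $y$'s) each receive at least one of the $n_1-2$ free $x$-nuts. By the same stars-and-bars count as before,
\begin{equation*}
P(W_i=1,W_j=1)=\frac{\binom{n-4}{n_1-4}}{\binom{n-2}{n_1-2}}=\frac{(n_1-2)(n_1-3)}{(n-2)(n-3)},
\end{equation*}
so that $Cov(W_i,W_j)=\frac{(n_1-2)(n_1-3)}{(n-2)(n-3)}-p^2$. Substituting these two ingredients and simplifying the resulting polynomial in $n_1,n_2$ (using $n=n_1+n_2$, and noting there are $n_2-1$ variance terms and $(n_2-1)(n_2-2)$ ordered covariance pairs) should collapse to $\frac{n_2(n_2-1)(n_1-2)(n_1-1)}{(n-2)^2(n-3)}$.

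Part (ii) is identical by the symmetry $x\leftrightarrow y$, $n_1\leftrightarrow n_2$, since on $\{R_1<R_2\}$ we have $R_M=R_2$ and the roles of the two samples are interchanged. For part (iii), I would reuse the representation from the proof of Lemma \ref{lemma.E}(iii): on $\{R_1=R_2\}$ one has $R_M=1+\sum_{i=2}^{n_2}W_i$ (the two half-sums over the $W_i$ and $W_i^*$ combine to a single sum of $n_2-1$ indicators, exactly as in that proof), so again $Var(R_M\,|\,R_1=R_2)=Var\bigl(\sum_{i=2}^{n_2}W_i\bigr)$. The difference from part (i) is only the marginal box-occupancy probability: here $P(W_i=1)=\frac{n_1-1}{n-2}$ and the pairwise probability is $\frac{(n_1-1)(n_1-2)}{(n-2)(n-3)}$ (now $n_1-1$ nuts go into $n_2$ boxes with two of them required to be nonempty — one should double-check the exact box/nut counts from the diagram (D-2), which is where a small bookkeeping slip is most likely). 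Feeding these into the same variance-of-a-sum formula and simplifying should give $\frac{(n_1-1)^2(n_2-1)^2}{(n-2)^2(n-3)}$.

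The genuinely delicate point is getting the combinatorial model of diagram (D-2) exactly right in case (iii): one must be careful that the $W_i$'s appearing in $R_M=1+\sum_{i=2}^{n_2}W_i$ really are exchangeable with the stated marginal, and that the relevant conditional sample space (permutations that start with $x$ and end with $y$, or vice versa) gives the claimed occupancy counts; the factor-of-two and the two orientations must cancel cleanly. Everything else is the routine but lengthy algebraic simplification of $\sum Var(W_i)+\sum Cov(W_i,W_j)$ into the compact closed forms stated; I would organize that computation by first writing $Var(R_M\mid\cdot)=(n_2-1)p(1-p)+(n_2-1)(n_2-2)\bigl(q-p^2\bigr)$ with $p,q$ the marginal and pairwise probabilities, and only then substituting the explicit values, since this keeps the polynomial manipulations manageable and makes the common denominator $(n-2)^2(n-3)$ transparent.
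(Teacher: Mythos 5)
Your proposal is correct and follows essentially the same route as the paper: represent $R_M$ on each conditioning event as a constant plus a sum of the gap indicators $W_i$, compute the marginal probability $p$ and the pairwise probability $q$ by the same stars-and-bars occupancy counts ($\frac{n_1-2}{n-2}$ and $\frac{(n_1-2)(n_1-3)}{(n-2)(n-3)}$ in case (i); $\frac{n_1-1}{n-2}$ and $\frac{(n_1-1)(n_1-2)}{(n-2)(n-3)}$ in case (iii)), and expand $Var=(n_2-1)p(1-p)+(n_2-1)(n_2-2)(q-p^2)$. The values you hedge about in case (iii) are exactly those the paper uses (equivalently, $n_1$ $x$'s into $n_2$ boxes with the first box forced nonempty), so no gap remains beyond the routine algebraic simplification.
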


\begin{proof}

i. From (D-1) we have
\begin{eqnarray*}
Var\left(W_i\right)&=&E\left(W^2_i\right)-{\left(E\left(W_i\right)\right)}^2
=E\left(W_i\right)-{\left(E\left(W_i\right)\right)}^2\\
&=&E\left(W_i\right)\left[1-E\left(W_i\right)\right]
=\frac{n_1-2}{n-2}\left(1-\frac{n_1-2}{n-2}\right)
=\frac{\left(n_1-2\right)n_2}{{\left(n-2\right)}^2}.
\end{eqnarray*}
Also, we need to verify $E(W_i W_j )$. For this, note that
\begin{eqnarray*}
E\left( W_{i}W_{j}\right) =P\left( \text{two runs occur in places }i\text{
and }j\right).
\end{eqnarray*}
The event $\left\{ \text{two runs occur in places }i\text{ and }j\right\} $ is equivalent to the case that we have $n_2+1$ boxes and we want to put $n_1-2$ nuts into these $n_2+1$ boxes where the boxes $i$ and $j$ should consist of at least one $x$. Then
\begin{eqnarray*}
P\left( \text{two runs occur in places }i\text{ and }j\right) =P\left(
W_{i}=1\text{ and }W_{j}=1\right) =\frac{\left(
\begin{array}{c}
n-4 \\
n_{1}-4%
\end{array}%
\right) }{\left(
\begin{array}{c}
n-2 \\
n_{1}-2%
\end{array}%
\right) }=\frac{(n_{1}-2)(n_{1}-3)}{(n-2)(n-3)}.
\end{eqnarray*}
And then, for $i\ne j$
\begin{eqnarray*}
Cov\left(W_i,W_j\right)&=&E\left(W_iW_j\right)-E\left(W_i\right)E\left(W_j\right)\\
&=&\frac{\left(n_1-2\right)\left(n_1-3\right)}{\left(n-2\right)\left(n-3\right)}-\frac{{\left(n_1-2\right)}^2}{{\left(n-2\right)}^2}
=-\frac{\left(n_1-2\right)n_2}{{\left(n-2\right)}^2\left(n-3\right)}.
\end{eqnarray*}
Therefore,
\begin{eqnarray*}
Var\left(R_M\mathrel{\left|\vphantom{R_M R_1>R_2}\right.\kern-\nulldelimiterspace}R_1>R_2\right)
&=&Var\left(2+\sum^{n_2}_{i=2}{W_i}\right)=\sum^{n_2}_{i=2}{Var\left(W_i\right)}+\sum^{n_2}_{i\ne j}{Cov(W_i,W_j)}\\
&=&\frac{\left(n_2-1\right)\left(n_1-2\right)n_2}{{\left(n-2\right)}^2}-\frac{\left(n_2-1\right)
\left(n_2-2\right)\left(n_1-2\right)n_2}{{\left(n-2\right)}^2\left(n-3\right)}\\
&=&\frac{n_2\left(n_2-1\right)\left(n_1-2\right)\left(n_1-1\right)}{{\left(n-2\right)}^2\left(n-3\right)}.
\end{eqnarray*}

\noindent ii. The proof is similar to part i.

\noindent iii. From (D-2) we have
\begin{eqnarray*}
Var\left(W_i\right)=E\left(W^2_i\right)-{\left(E\left({W}_i\right)\right)}^2
=E\left(W_i\right)-{\left(E\left(W_i\right)\right)}^2\\
=E\left(W_i\right)\left[1-E\left(W_i\right)\right]
=\frac{\left(n_1-1\right)(n_2-1)}{{\left(n-2\right)}^2}.
\end{eqnarray*}
Also, for $i\ne j$
\begin{eqnarray*}
Cov\left(W_i,W_j\right)&=&E\left(W_iW_j\right)-E\left(W_i\right)E\left(W_j\right)
=\frac{\left(n_1-1\right)\left(n_1-2\right)}{\left(n-2\right)\left(n-3\right)}
-\frac{{\left(n_1-1\right)}^2}{{\left(n-2\right)}^2}\\
&=&\frac{n_1-1}{n-2}\left(\frac{n_1-2}{n-3}-\frac{n_1-1}{n-2}\right)
=-\frac{\left(n_1-1\right)\left(n_2-1\right)}{\left(n-3\right){\left(n-2\right)}^2}.
\end{eqnarray*}
Therefore
\begin{eqnarray*}
Var\left(R_M\mathrel{\left|\vphantom{R_M R_1=R_2}\right.\kern-\nulldelimiterspace}R_1=R_2\right)&=&Var\left(R_1\mathrel{\left|\vphantom{R_1 R_1=R_2}\right.\kern-\nulldelimiterspace}R_1=R_2\right)=Var\left(1+\sum^{n_2}_{i=2}{W_i}\right)\\
 &=&\sum^{n_2}_{i=2}{Var\left(W_i\right)}+\sum^{n_2}_{i\ne j}{Cov(W_i,W_j)}\\
 &=&\frac{{(n_2-1)}^2\left(n_1-1\right)}{{\left(n-2\right)}^2}
 -\frac{{\left(n_2-1\right)}^2\left(n_1-1\right)\left(n_2-2\right)}{\left(n-3\right){\left(n-2\right)}^2}\\
 &=&
 \frac{{\left(n_2-1\right)}^2{\left(n_1-1\right)}^2}{{\left(n-2\right)}^2\left(n-3\right)}.
\end{eqnarray*}
\end{proof}

\begin{theorem}\label{thm.EVM}
Let $R_{1}$ and $R_{2}$ be two runs of $x$ values and $y$ values, respectively. Then,
\begin{eqnarray}
E(R_M)=2+\frac{n(n_1-1)(n_2-1)-2n_1n_2}{n(n-1)}\hspace{4cm}\\
Var(R_M)=\frac{n_1n_2[n_1^3+n_2^3+n_1^3n_2+n_1n_2^3-n_1^2-n_2^2+2n_1
^2n_2^2-5n_1^2n_2-5n_1n_2^2+6n_1n_2]}{n^2(n-1)^2(n-2)}.
\end{eqnarray}
\end{theorem}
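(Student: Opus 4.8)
The plan is to obtain both moments by conditioning on the partition $\Omega=\{R_1<R_2\}\cup\{R_1=R_2\}\cup\{R_1>R_2\}$, feeding in the conditional quantities of Lemmas~\ref{lemma.E} and~\ref{lemma.V} together with the cell probabilities of Lemma~\ref{lemma.P}. No new probabilistic ingredient is needed; the work is purely algebraic simplification.

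For $E(R_M)$, substitute the three conditional expectations of Lemma~\ref{lemma.E} and the three probabilities of Lemma~\ref{lemma.P} into~(\ref{eq.ERM}) and split the outcome into the ``integer part'' (the $2,2,1$ summands) and the ``fractional part'' (the three summands carrying $n-2$ in the denominator). The integer part is $2\,\frac{n_1(n_1-1)}{n(n-1)}+2\,\frac{n_2(n_2-1)}{n(n-1)}+\frac{2n_1n_2}{n(n-1)}$, which collapses to $2-\frac{2n_1n_2}{n(n-1)}$ since $n_1(n_1-1)+n_2(n_2-1)+n_1n_2=n(n-1)-n_1n_2$. For the fractional part, clearing the common denominator $n(n-1)(n-2)$ leaves the numerator
\[
n_1(n_1-1)(n_1-2)(n_2-1)+n_2(n_2-1)(n_2-2)(n_1-1)+2n_1n_2(n_1-1)(n_2-1),
\]
which factors as $(n_1-1)(n_2-1)\bigl[n_1(n_1-2)+n_2(n_2-2)+2n_1n_2\bigr]=(n_1-1)(n_2-1)\,n(n-2)$ by completing the square in $n_1,n_2$; the factor $n-2$ cancels, giving $\frac{n(n_1-1)(n_2-1)}{n(n-1)}$. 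Summing the two parts yields the stated $E(R_M)$.

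For $Var(R_M)$, use the law of total variance with respect to the same partition $\mathcal G$: namely $Var(R_M)$ equals the expected conditional variance $Var(R_M|R_1>R_2)P(R_1>R_2)+Var(R_M|R_1<R_2)P(R_1<R_2)+Var(R_M|R_1=R_2)P(R_1=R_2)$ plus the variance of the conditional mean $Var\bigl(E(R_M|\mathcal G)\bigr)$. For the first piece, insert Lemma~\ref{lemma.V} and Lemma~\ref{lemma.P} and pull out the common factor $\frac{n_1n_2(n_1-1)(n_2-1)}{n(n-1)(n-2)^2(n-3)}$; the leftover sum $(n_1-1)(n_1-2)+(n_2-1)(n_2-2)+2(n_1-1)(n_2-1)$ again completes the square to $(n-2)(n-3)$, so the first piece is $\frac{n_1n_2(n_1-1)(n_2-1)}{n(n-1)(n-2)}$. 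For the second piece, $E(R_M|\mathcal G)$ is the three-valued random variable taking the values $a_1,a_2,a_3$ on the right of Lemma~\ref{lemma.E} with probabilities $p_1,p_2,p_3$ from Lemma~\ref{lemma.P}, so $Var\bigl(E(R_M|\mathcal G)\bigr)=\sum_i p_i a_i^2-\bigl(\sum_i p_i a_i\bigr)^2$, and $\sum_i p_i a_i=E(R_M)$ is already known. Writing $a_i=c_i/(n-2)$ with each $c_i$ quadratic turns $\sum_i p_i a_i^2$ into a single rational function with denominator $n(n-1)(n-2)^2$; adding this to the first piece, subtracting $[E(R_M)]^2$, and placing everything over $n^2(n-1)^2(n-2)^2$, one checks that the resulting numerator is divisible by $n-2$. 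Cancelling that factor gives the denominator $n^2(n-1)^2(n-2)$ and the numerator $n_1n_2$ times the displayed sextic.

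The one genuinely laborious step is the last: expanding $\sum_i p_i a_i^2$ and $[E(R_M)]^2$, combining them with the first piece, and confirming both the cancellation of a factor $n-2$ and the precise coefficients of the degree-six polynomial. I would keep this under control by manipulating the quadratic numerators $c_i$ rather than the fractions $a_i$, and by verifying the final identity at a few values such as $(n_1,n_2)=(2,2),(2,3),(3,4)$ before trusting the symbolic computation. As an independent check, properties (i)--(ii) force $|R_1-R_2|\in\{0,1\}$, whence $R_M=\tfrac12\bigl(R+\mathbf 1_{\{R_1\ne R_2\}}\bigr)$ with $R=R_1+R_2$ the total number of runs; the classical moments of $R$ together with $E\bigl(\mathbf 1_{\{R_1\ne R_2\}}\bigr)=P(R_1\ne R_2)$ recover $E(R_M)$ immediately and reduce $Var(R_M)$ to a single covariance $Cov\bigl(R,\mathbf 1_{\{R_1\ne R_2\}}\bigr)$.
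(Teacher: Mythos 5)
Your proposal is correct and follows essentially the same route as the paper's Appendix A.1: condition on the partition $\{R_1>R_2\}\cup\{R_1<R_2\}\cup\{R_1=R_2\}$, combine Lemmas \ref{lemma.P}, \ref{lemma.E}, and \ref{lemma.V} via the total expectation formula and the total variance decomposition (the paper writes the latter out explicitly, which is algebraically identical to your $E[\mathrm{Var}]+\mathrm{Var}[E]$ form), and finish by algebraic simplification. Your explicit factorizations of the intermediate numerators and the closing remark that $R_M=\tfrac12\bigl(R+\mathbf 1_{\{R_1\ne R_2\}}\bigr)$ are welcome additions, but they do not change the underlying argument.
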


\begin{proof} See the Appendix A.1.
\end{proof}

\begin{lemma}\label{lem.Rm1}
Let $R_{{\rm 1}}$ and $R_{{\rm 2}}$ be the total number of runs of $x$ values and $y$ values, respectively. Then

\noindent i.
\[E\left(R_m\left|R_{{\rm 1}}{\rm >}R_{{\rm 2}}\right.\right){\rm =2+}\frac{\left(n_{{\rm 1}}{\rm -}{\rm 1}\right){\rm (}n_{{\rm 2}}{\rm -}{\rm 2)}}{n{\rm -}{\rm 2}},\]
\noindent ii.
\[{\rm E}\left(R_m\left|R_{{\rm 1}}{\rm <}R_{{\rm 2}}\right.\right){\rm =2+}\frac{{\rm (}n_{{\rm 1}}{\rm -}{\rm 2)}\left(n_{{\rm 2}}{\rm -}{\rm 1}\right)}{n{\rm -}{\rm 2}},\]
\noindent iii.
\[E\left(R_m\left|R_{{\rm 1}}{\rm =}R_{{\rm 2}}\right.\right){\rm =1+}\frac{\left(n_{{\rm 1}}{\rm -}{\rm 1}\right){\rm (}n_{{\rm 2}}{\rm -}{\rm 1)}}{n{\rm -}{\rm 2}}.\]
\end{lemma}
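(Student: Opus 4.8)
\noindent\emph{Proof idea.} The plan is to reuse, essentially verbatim, the machinery developed for $R_M$ in Lemma~\ref{lemma.E}, since $R_m$ and $R_M$ are rigidly tied together on each block of the partition $\Omega=\{R_1>R_2\}\cup\{R_1<R_2\}\cup\{R_1=R_2\}$. The first step is to note, via properties (i)--(ii), that off the diagonal one always has $|R_1-R_2|=1$: on $\{R_1>R_2\}$ the arrangement begins and ends with $x$'s, so $R_m=R_2=R_1-1$ while $R_M=R_1$; on $\{R_1<R_2\}$ it begins and ends with $y$'s, so $R_m=R_1=R_2-1$ while $R_M=R_2$; and on $\{R_1=R_2\}$ one has $R_m=R_M$. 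Hence on each off-diagonal event $R_m=R_M-1$ and on the diagonal $R_m=R_M$, so the three desired conditional means follow at once from Lemma~\ref{lemma.E} by subtracting $1$ in cases (i) and (ii) and copying case (iii) unchanged.

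To instead give a self-contained derivation that parallels the paper's proof of Lemma~\ref{lemma.E}, I would redo the diagram/indicator count directly. For (i): on $\{R_1>R_2\}$ place the $y$'s in a row as in diagram (D-1), so that the $x$'s fill the $n_2+1$ gaps with the two end gaps forced nonempty; then $R_m=R_2$ is one plus the number of the $n_2-1$ interior gaps that are occupied, i.e. $R_m=1+\sum_{i=2}^{n_2}W_i$. Taking expectations and recycling the ``$n_1-2$ nuts in $n_2+1$ boxes'' probability from Lemma~\ref{lemma.E}, namely $P(W_i=1)=\binom{n-3}{n_1-3}/\binom{n-2}{n_1-2}=\frac{n_1-2}{n-2}$, yields $E(R_m\mid R_1>R_2)=1+(n_2-1)\frac{n_1-2}{n-2}$, which I would then collapse to a single ratio. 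Part (ii) is the mirror image with the two samples interchanged (lay the $x$'s in a row), and part (iii) is handled by diagram (D-2) exactly as in Lemma~\ref{lemma.E}(iii), where $R_m=R_M$ so nothing needs to change.

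I do not anticipate a real obstacle: the argument is routine, and the single point demanding care is the additive constant. Because $R_m$ counts the runs of the sample that does \emph{not} bracket the arrangement, its diagram carries only one forced end-run (not two), so the constant on $\{R_1\ne R_2\}$ is $1$ rather than $2$; I would confirm this on the smallest case $n_1=n_2=2$, where the only outcome with $R_1>R_2$ is $xyyx$ and gives $R_m=1$. This check also fixes which numerator goes with which event, and it indicates that the constant in parts (i) and (ii) of the statement should be read as $1$ (part (iii) being already correct and equal to Lemma~\ref{lemma.E}(iii)). The remaining routine slip to avoid---using $P(W_i=1)$ and not $P(W_i=0)$ in the box count---is inherited directly from the proof of Lemma~\ref{lemma.E}.
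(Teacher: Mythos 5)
Your argument is correct and is, in substance, exactly the adaptation of Lemma~\ref{lemma.E} that the paper itself gestures at (its entire proof is the remark that the argument is ``similar to Lemma~\ref{lemma.E}''), and your shortcut is cleaner still: on $\{R_1>R_2\}$ the arrangement begins and ends with $x$'s, so $R_m=R_2=R_M-1$; on $\{R_1<R_2\}$ symmetrically $R_m=R_1=R_M-1$; and on $\{R_1=R_2\}$, $R_m=R_M$, so all three conditional means can be read off Lemma~\ref{lemma.E}. Your self-contained diagram version is the faithful analogue of (D-1): only one end run of the minor sample is forced, so $R_m=1+\sum_{i=2}^{n_2}W_i$ with $P(W_i=1)=\frac{n_1-2}{n-2}$, giving $1+(n_2-1)\frac{n_1-2}{n-2}$.

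You are also right that the lemma as printed cannot stand in parts (i) and (ii), and it is worth stating the repair unambiguously, since ``read the constant as $1$'' alone would still leave the numerators attached to the wrong events: the correct formulas are $E\left(R_m\mid R_1>R_2\right)=1+\frac{(n_1-2)(n_2-1)}{n-2}=\frac{n_2(n_1-1)}{n-2}$ and $E\left(R_m\mid R_1<R_2\right)=1+\frac{(n_1-1)(n_2-2)}{n-2}=\frac{n_1(n_2-1)}{n-2}$, with part (iii) unchanged. The paper's own example with $n_1=3$, $n_2=2$ confirms this: the permutations with $R_1>R_2$ are $xxyyx$, $xyyxx$, $xyxyx$, giving $R_m$ values $1,1,2$ and conditional mean $4/3=1+\frac{(n_1-2)(n_2-1)}{n-2}$, whereas the printed formula gives $2$; moreover, only the corrected values reproduce $E(R_m)=\frac{n_1n_2}{n-1}$ of Theorem~\ref{thm.EVm} when combined with Lemma~\ref{lemma.P} in the total-expectation decomposition. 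Finally, your observation that $R_m$ differs from $R_M$ by a constant on each conditioning event shows at once that the conditional variances of $R_m$ must coincide with those of $R_M$ in Lemma~\ref{lemma.V}, so the analogous interchange printed in Lemma~\ref{lem.Rm2}(i)--(ii) needs the same correction.
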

\begin{proof}
The proof is similar to  Lemma \ref{lemma.E}.
\end{proof}

\begin{lemma}\label{lem.Rm2}
Let $R_{{\rm 1}}$ and $R_{{\rm 2}}$ be the total numbers of runs of $n_{{\rm 1}}x$-type values and $n_{{\rm 2}}y$-type values, respectively. Then,

\noindent i.
\[Var\left(R_m\left|R_{{\rm 1}}{\rm >}R_{{\rm 2}}\right.\right)
{\rm =}\frac{n_{{\rm 1}}\left(n_{{\rm 1}}{\rm -}{\rm 1}\right)\left(n_{{\rm 2}}{\rm -}{\rm 2}\right)\left(n_{{\rm 2}}{\rm -}{\rm 1}\right)}{{\left(n{\rm -}{\rm 2}\right)}^{{\rm 2}}\left(n{\rm -}{\rm 3}\right)}
{\rm ,\ \ \ \ \ \ }n{\rm >}3,\]
\noindent ii.
\[Var\left(R_m\left|R_{{\rm 1}}{\rm <}R_{{\rm 2}}\right.\right)
{\rm =}\frac{n_{{\rm 2}}\left(n_{{\rm 2}}{\rm -}{\rm 1}\right)\left(n_{{\rm 1}}{\rm -}{\rm 2}\right)\left(n_{{\rm 1}}{\rm -}{\rm 1}\right)}{{\left(n{\rm -}{\rm 2}\right)}^{{\rm 2}}\left(n{\rm -}{\rm 3}\right)}
{\rm ,\ \ \ \ \ \ }n{\rm >}3,\]
\noindent iii.
\[Var\left(R_m\left|R_{{\rm 1}}{\rm =}R_{{\rm 2}}\right.\right){\rm =}\frac{{\left(n_{{\rm 2}}{\rm -}{\rm 1}\right)}^{{\rm 2}}{\left(n_{{\rm 1}}{\rm -}{\rm 1}\right)}^{{\rm 2}}}{{\left(n{\rm -}{\rm 2}\right)}^{{\rm 2}}\left(n{\rm -}{\rm 3}\right)}{\rm ,\ \ \ \ \ \ }n{\rm >}3\].

\end{lemma}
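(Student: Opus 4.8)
The plan is to mirror exactly the structure of Lemma \ref{lemma.V}, exploiting the symmetry that $R_m$ is obtained from $R_M$ by interchanging the roles of the two samples. Concretely, recall that $R_m = \tfrac{1}{2}\{R_1+R_2-|R_1-R_2|\}$. On the event $\{R_1>R_2\}$ we have $R_m = R_2$, and on the event $\{R_1<R_2\}$ we have $R_m=R_1$; on $\{R_1=R_2\}$ we have $R_m=R_1=R_2$. So each of the three conditional variances of $R_m$ coincides with a conditional variance of the \emph{smaller} run count, which is laid out by a diagram of exactly the same type as (D-1) but with the roles of $x$ and $y$ swapped where appropriate.

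For part i, condition on $\{R_1>R_2\}$; then the arrangement begins and ends with $x$-runs, so as in (D-1) it begins and ends with $y$-free slots, and $R_2 = 1 + \sum_{i} V_i$ where the $V_i$ are the Bernoulli indicators for a run of $y$'s occurring in the interior slots between consecutive $x$'s. There are $n_1-1$ such interior slots, and by the same box-counting argument used in Lemma \ref{lemma.E}(iii) (placing the $n_2$ $y$-nuts into $n_1+1$ boxes with the two end boxes forced nonempty — actually here the end boxes are automatically nonempty since the sequence starts and ends with $x$, so one places $n_2$ $y$'s into $n_1-1$ interior boxes freely, a run in slot $i$ meaning box $i$ is nonempty), we get $E(V_i) = \frac{n_2-1}{n-2}$ and $E(V_iV_j) = \frac{(n_2-1)(n_2-2)}{(n-2)(n-3)}$ for $i\neq j$. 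Hence $\mathrm{Var}(V_i) = \frac{(n_2-1)n_1}{(n-2)^2}$ and $\mathrm{Cov}(V_i,V_j) = -\frac{(n_2-1)n_1}{(n-2)^2(n-3)}$, and summing $\mathrm{Var}(R_m\mid R_1>R_2) = \sum_{i} \mathrm{Var}(V_i) + \sum_{i\neq j}\mathrm{Cov}(V_i,V_j)$ over the $n_1-1$ slots collapses, exactly as in Lemma \ref{lemma.V}(i), to $\frac{n_1(n_1-1)(n_2-2)(n_2-1)}{(n-2)^2(n-3)}$. Part ii is identical after swapping $n_1\leftrightarrow n_2$. For part iii, note $\{R_1=R_2\}$ forces $R_m=R_1$, so $\mathrm{Var}(R_m\mid R_1=R_2) = \mathrm{Var}(R_1\mid R_1=R_2)$, which was already computed in Lemma \ref{lemma.V}(iii) to be $\frac{(n_2-1)^2(n_1-1)^2}{(n-2)^2(n-3)}$; there is literally nothing new to do.

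Alternatively — and this is the slicker route — one avoids re-deriving anything by observing that Lemma \ref{lem.Rm1} already records $E(R_m\mid\cdot)$, and that on each of the three conditioning events $R_m$ and $R_M$ differ only by which of $R_1,R_2$ they equal. Since interchanging $(n_1,x)\leftrightarrow(n_2,y)$ maps the event $\{R_1>R_2\}$ to $\{R_2>R_1\}$ and maps $R_M$ to $R_m$, every statement in Lemma \ref{lemma.V} transforms term-by-term into the corresponding statement here: \ref{lemma.V}(i) with $n_1\leftrightarrow n_2$ gives \ref{lem.Rm2}(ii), \ref{lemma.V}(ii) gives \ref{lem.Rm2}(i), and \ref{lemma.V}(iii) is symmetric in $n_1,n_2$ and gives \ref{lem.Rm2}(iii) unchanged. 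I would present the symmetry argument as the main proof and remark that a direct diagram computation (as sketched above) gives the same answer.

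The only genuine subtlety — and the one place to be careful — is the bookkeeping of the diagram on the conditioning event: one must check that when the arrangement begins and ends with $x$'s (the case $R_1>R_2$), the relevant indicators are those for \emph{interior} $y$-slots, that there are exactly $n_1-1$ of them, and that the forced end conditions do not change $E(V_i)$ or $E(V_iV_j)$ for interior $i,j$. This is exactly parallel to the treatment of $W_i$ in the proof of Lemma \ref{lemma.V}, so no new difficulty arises; the computation is routine and the symmetry observation makes even that unnecessary. I would therefore write: ``The proof is entirely analogous to that of Lemma \ref{lemma.V}, interchanging the roles of $n_1$ and $n_2$; parts i and ii follow from parts ii and i of Lemma \ref{lemma.V} respectively, and part iii is identical since the expression is symmetric in $n_1$ and $n_2$ and $R_m=R_1$ on $\{R_1=R_2\}$.''
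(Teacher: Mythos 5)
Your opening observation---that on $\{R_1>R_2\}$ one has $R_m=R_2=R_1-1=R_M-1$, on $\{R_1<R_2\}$ one has $R_m=R_M-1$, and on $\{R_1=R_2\}$ one has $R_m=R_M$---is correct, but you do not follow it to its logical end, and it in fact contradicts the formulas you claim to derive: subtracting a constant does not change a variance, so it forces $\mathrm{Var}(R_m\mid R_1>R_2)=\mathrm{Var}(R_M\mid R_1>R_2)$, which by Lemma \ref{lemma.V}(i) is $\frac{n_2(n_2-1)(n_1-2)(n_1-1)}{(n-2)^2(n-3)}$, i.e.\ the expression printed here as part ii, not part i; similarly $\mathrm{Var}(R_m\mid R_1<R_2)$ must coincide with Lemma \ref{lemma.V}(ii). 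In other words, parts i and ii of the statement as printed are interchanged (the same slip occurs in Lemma \ref{lem.Rm1}; Theorem \ref{thm.EVm} is nevertheless correct once the swapped versions are used in Appendix A.2). A direct check with $n_1=3$, $n_2=2$: conditionally on $R_1>R_2$ the equally likely arrangements are $xxyyx$, $xyyxx$, $xyxyx$, with $R_m$ taking the values $1,1,2$, so the conditional variance is $2/9$, whereas the printed part i evaluates to $0$. Hence no correct argument can produce the lemma exactly as written, yet your proposal asserts that it does; only your part iii (where $R_m=R_M$ and Lemma \ref{lemma.V}(iii) applies verbatim) is sound.

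The reason your sketch appears to land on the printed formula is that its ingredients are wrong. In the diagram for $\{R_1>R_2\}$ the $n_2$ $y$'s are distributed over the $n_1-1$ interior gaps between consecutive $x$'s (the two end gaps are forced \emph{empty}, not nonempty), $R_2=\sum_i V_i$ with no added constant, and the correct inclusion probabilities are $E(V_i)=\frac{n_2}{n-2}$ and $E(V_iV_j)=\frac{n_2(n_2-1)}{(n-2)(n-3)}$, not $\frac{n_2-1}{n-2}$ and $\frac{(n_2-1)(n_2-2)}{(n-2)(n-3)}$; moreover the variance and covariance you quote do not even follow from your own values of $E(V_i)$ and $E(V_iV_j)$. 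Carried out correctly, the sum over the $n_1-1$ gaps gives $\frac{n_2(n_2-1)(n_1-1)(n_1-2)}{(n-2)^2(n-3)}$, consistent with the previous paragraph and not with the printed part i. Your ``slicker'' symmetry route also rests on a false premise: interchanging $(n_1,x)\leftrightarrow(n_2,y)$ swaps $R_1$ and $R_2$ and therefore leaves both $R_M=\max(R_1,R_2)$ and $R_m=\min(R_1,R_2)$ unchanged; it does not map $R_M$ to $R_m$, so it cannot transfer Lemma \ref{lemma.V} into Lemma \ref{lem.Rm2}---its legitimate use is only to deduce part ii from part i within the same lemma. The paper gives no details (``similar to Lemma \ref{lemma.V}''), but the clean proof is precisely your opening identity $R_m=R_M-1$ on $\{R_1\neq R_2\}$ and $R_m=R_M$ on $\{R_1=R_2\}$, applied event by event, with the statement's parts i and ii corrected accordingly.
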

\begin{proof}
The proof is similar to  Lemma \ref{lemma.V}.
\end{proof}

\begin{theorem}\label{thm.EVm}
Let $R_{1}$ and $R_{2}$ be two runs of $x$ values and $y$ values, respectively. Then
\begin{eqnarray*}
E(R_m)=\frac{n_1n_2}{n-1},\hspace{1.6cm}
Var(R_m)=\frac{n_1n_2(n_1-1)(n_2-1)}{{(n-1)}^2(n-2)}, \ \ \ \ n>2.
\end{eqnarray*}
\end{theorem}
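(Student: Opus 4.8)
The plan is to mimic the argument used for $R_M$: the events $\{R_1<R_2\}$, $\{R_1>R_2\}$, $\{R_1=R_2\}$ partition $\Omega$, so both $E(R_m)$ and $E(R_m^2)$ decompose as weighted averages of their conditional versions, with the weights supplied by Lemma \ref{lemma.P}, the conditional means by Lemma \ref{lem.Rm1}, and the conditional variances (hence the conditional second moments, via $E(R_m^2\mid\cdot)=Var(R_m\mid\cdot)+[E(R_m\mid\cdot)]^2$) by Lemma \ref{lem.Rm2}. Equivalently one may invoke the law of total variance $Var(R_m)=E[Var(R_m\mid I)]+Var[E(R_m\mid I)]$, where $I$ records which of the three events occurs. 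After these substitutions everything reduces to algebra in $n_1,n_2$.

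For the mean I would bypass most of that algebra by exploiting the pointwise identity $R_M+R_m=R_1+R_2$. The elementary indicator decomposition $R_1=\sum_{i=1}^{n}\mathbf 1(\text{position }i\text{ is }x,\ \text{position }i-1\text{ is not }x)$ gives $E(R_1)=n_1(n_2+1)/n$ and, symmetrically, $E(R_2)=n_2(n_1+1)/n$, so $E(R_1)+E(R_2)=1+2n_1n_2/n$. Subtracting the value of $E(R_M)$ from Theorem \ref{thm.EVM} and simplifying with $(n_1-1)(n_2-1)=n_1n_2-n+1$ collapses the expression to $n_1n_2/(n-1)$. One can just as well feed Lemma \ref{lem.Rm1} and Lemma \ref{lemma.P} directly into the three-term expansion of $E(R_m)$; the cancellations are the same.

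The variance is where the real work sits, and I expect the bookkeeping there to be the main obstacle. Writing $Var(R_m)=E[Var(R_m\mid I)]+Var[E(R_m\mid I)]$, the first term is $\sum_E Var(R_m\mid E)P(E)$, whose summands carry denominator $n(n-1)(n-2)^2(n-3)$ by Lemmas \ref{lemma.P} and \ref{lem.Rm2}, and the second term is $\sum_E (E(R_m\mid E))^2P(E)-(n_1n_2/(n-1))^2$. Over the common denominator $n(n-1)^2(n-2)^2(n-3)$ one must show that the combined numerator collapses to $n_1n_2(n_1-1)(n_2-1)\,n(n-2)(n-3)$, whereupon the factors $n$, $(n-3)$ and one power of $(n-2)$ cancel and the stated formula $\dfrac{n_1n_2(n_1-1)(n_2-1)}{(n-1)^2(n-2)}$ emerges. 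The delicate point is exactly this: the conditional variances in Lemma \ref{lem.Rm2} contain a spurious $(n-3)$ in the denominator that has to disappear from the final answer, so the polynomial arithmetic must be carried out carefully (the degree-six numerator appearing in Theorem \ref{thm.EVM} is a warning of how unpleasant the intermediate expressions become).

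Two independent checks keep the computation honest. First, the answer must be symmetric under $n_1\leftrightarrow n_2$, which it visibly is and which should be preserved line by line. Second, when $n_1=1$ (or $n_2=1$) one has $R_1\equiv 1$, hence $R_m\equiv 1$ and $Var(R_m)=0$; the claimed formula contains the factor $(n_1-1)(n_2-1)$ and so passes this test, giving a convenient spot-check on the final numerator. As a further consistency test one can verify $Var(R_1+R_2)=Var(R_M)+Var(R_m)+2\,Cov(R_M,R_m)$ once $Cov(R_M,R_m)$ is in hand by the same conditioning. The hypothesis $n>2$ is precisely what makes $E(R_m)$ and its variance well defined, and $n>3$ is used implicitly through Lemma \ref{lem.Rm2}.
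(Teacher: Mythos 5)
Your proposal is correct and follows essentially the same route as the paper: the mean is obtained from $E(R_m)=E(R_1+R_2)-E(R_M)$ and the variance from the law of total variance over the three events $\{R_1>R_2\}$, $\{R_1<R_2\}$, $\{R_1=R_2\}$ using Lemmas \ref{lemma.P}, \ref{lem.Rm1}, and \ref{lem.Rm2}, exactly as in Appendix A.2. The only cosmetic difference is that you rederive $E(R_1)+E(R_2)=1+\frac{2n_1n_2}{n}$ by an indicator decomposition, whereas the paper simply quotes Marshall's formula for $E(R)$.
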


\begin{proof}
See the Appendix A.2.
\end{proof}

\begin{theorem}\label{thm.cov}
Let $R_{1}$ and $R_{2}$ be two runs of $x$ values and $y$ values, respectively. Then,
\[Cov\left(R_m,R_M\right)=\frac{n_1n_2(n_1-1)(n_2-1)}{n{(n-1)}^2}.\]
\end{theorem}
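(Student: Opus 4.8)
The key observation is that, by properties (i) and (ii) recalled in Section~\ref{sec.per}, the gap $R_M-R_m=|R_1-R_2|$ takes only the values $0$ and $1$: it equals $1$ on $\{R_1\neq R_2\}$ and $0$ on $\{R_1=R_2\}$. Hence $R_M=R_m+\mathbf{1}_{\{R_1\neq R_2\}}$, and bilinearity of covariance gives
\[Cov\left(R_m,R_M\right)=Var(R_m)+Cov\big(R_m,\mathbf{1}_{\{R_1\neq R_2\}}\big)=Var(R_m)+E\big(R_m\mathbf{1}_{\{R_1\neq R_2\}}\big)-E(R_m)\,P(R_1\neq R_2).\]
So the whole proof reduces to assembling three ingredients.

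First, $Var(R_m)=\frac{n_1n_2(n_1-1)(n_2-1)}{(n-1)^2(n-2)}$ and $E(R_m)=\frac{n_1n_2}{n-1}$ are already given by Theorem~\ref{thm.EVm}. Second, $P(R_1\neq R_2)=P(R_1>R_2)+P(R_1<R_2)=\frac{n_1(n_1-1)+n_2(n_2-1)}{n(n-1)}$ by Lemma~\ref{lemma.P}. Third, conditioning on the partition of $\{R_1\neq R_2\}$ into $\{R_1>R_2\}$ and $\{R_1<R_2\}$,
\[E\big(R_m\mathbf{1}_{\{R_1\neq R_2\}}\big)=E\left(R_m\mid R_1>R_2\right)P(R_1>R_2)+E\left(R_m\mid R_1<R_2\right)P(R_1<R_2),\]
where the conditional means come from Lemma~\ref{lem.Rm1} and the probabilities again from Lemma~\ref{lemma.P}.

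Substituting all of this and simplifying — the terms share the common denominator $n(n-1)^2(n-2)$, and the factor $(n-2)$ then cancels from the resulting numerator — should leave exactly $\frac{n_1n_2(n_1-1)(n_2-1)}{n(n-1)^2}$. An alternative and fully equivalent route, closer in style to the proofs of Theorems~\ref{thm.EVM} and~\ref{thm.EVm}, is to use the elementary identity $R_mR_M=R_1R_2$, write it as $R_m^2+R_m$ on $\{R_1\neq R_2\}$ and as $R_m^2$ on $\{R_1=R_2\}$, and then condition $E(R_mR_M)$ on the three events $\{R_1>R_2\}$, $\{R_1<R_2\}$, $\{R_1=R_2\}$, invoking Lemmas~\ref{lem.Rm1} and~\ref{lem.Rm2} for the conditional first and second moments; subtracting $E(R_m)E(R_M)$ with $E(R_M)$ from Theorem~\ref{thm.EVM} gives the same answer.

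The main obstacle is purely the bookkeeping of combining terms carrying denominators $(n-2)$, $n(n-1)$, and $(n-1)^2(n-2)$ over a common denominator and verifying that the factors $(n-2)$ (and, in the alternative route, $(n-3)$) cancel and the numerator factors as $n_1n_2(n_1-1)(n_2-1)$; there is no conceptual difficulty once the decomposition $R_M=R_m+\mathbf{1}_{\{R_1\neq R_2\}}$ is in hand. As a sanity check one can additionally use $R_m+R_M=R_1+R_2$ together with the already-computed $Var(R_m)$, $Var(R_M)$ to confirm consistency with $Var(R_m+R_M)$.
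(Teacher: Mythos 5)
Your route is genuinely different from the paper's. The paper never touches conditional moments of the pair $(R_m,R_M)$ in this proof: it quotes Marshall's formula $Var(R)=\frac{2n_1n_2(2n_1n_2-n)}{n^2(n-1)}$ for the total number of runs, uses $R=R_m+R_M$, and solves $Var(R)=Var(R_m)+Var(R_M)+2Cov(R_m,R_M)$ for the covariance with the two variances already supplied by Theorems \ref{thm.EVM} and \ref{thm.EVm} --- essentially your closing ``sanity check'' promoted to the actual proof, which is shorter because no new expectation needs to be computed. Your primary argument, based on the identity $R_M=R_m+\mathbf{1}_{\{R_1\neq R_2\}}$ and on evaluating $Cov(R_m,\mathbf{1}_{\{R_1\neq R_2\}})$ by conditioning, is conceptually sound, does not need Marshall's external result, and does lead to the stated formula: one gets $Cov(R_m,\mathbf{1}_{\{R_1\neq R_2\}})=-\frac{2n_1n_2(n_1-1)(n_2-1)}{n(n-1)^2(n-2)}$, and adding $Var(R_m)$ from Theorem \ref{thm.EVm} gives exactly $\frac{n_1n_2(n_1-1)(n_2-1)}{n(n-1)^2}$.

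One concrete warning: do not take the conditional means from Lemma \ref{lem.Rm1} as printed. Your own decomposition forces $E(R_m\mid R_1>R_2)=E(R_M\mid R_1>R_2)-1=1+\frac{(n_1-2)(n_2-1)}{n-2}$ and $E(R_m\mid R_1<R_2)=1+\frac{(n_1-1)(n_2-2)}{n-2}$ (via Lemma \ref{lemma.E}), whereas Lemma \ref{lem.Rm1} states $2+\frac{(n_1-1)(n_2-2)}{n-2}$ and $2+\frac{(n_1-2)(n_2-1)}{n-2}$. The printed values are inconsistent both with your identity and with direct enumeration: for $n_1=3$, $n_2=2$ the table in Section \ref{sec.exm} gives $E(R_m\mid R_1>R_2)=4/3$, not $2$, and substituting the printed values into your formula yields about $0.48$ instead of the correct $0.15$. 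With the corrected conditional means the algebra you deferred does close as claimed, so the gap lies only in the ingredient you cited, not in the idea; your secondary route via $R_mR_M=R_1R_2$ needs the analogous correction, since the conditional moments of $R_m$ given $R_1>R_2$ or $R_1<R_2$ in Lemmas \ref{lem.Rm1} and \ref{lem.Rm2} should simply be those of $R_M-1$ from Lemmas \ref{lemma.E} and \ref{lemma.V}.
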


\begin{proof}
The variance of the total number of runs, $R$, was obtained by
Marshall \cite{marshall-70}
which has the following form:
\[Var\left(R\right)=\frac{{2n}_1n_2\left({2n}_1n_2-n\right)}{n^2\left(n-1\right)}.\]
This formula helps us to find the covariance of the nonparametric statistics $R_m$ and $R_M$ as
\[Var\left(R\right)=Var\left(R_m+R_M\right)=Var\left(R_m\right)+Var\left(R_M\right)+2Cov\left(R_m,R_M\right),\]
and  the proof is completed.
\end{proof}

\section{Two illustrative examples}
\label{sec.exm}
In this section, we give two examples to illustrate how our calculations work.

\begin{example}
We take $n_1=3$ and $n_2=2$.  In the following table, we gave the possible permutations, number of runs of $x$ values ($R_1$),  number of runs of $y$ values ($R_2$), the minimum value of $R_1$ and $R_2$, and maximum value of $R_1$ and $R_2$.

\begin{center}
\begin{tabular}{|c|cccc|}\hline
permutations & $R_{1}$ & $R_{2}$ & $R_{m}$ & $R_{M}$ \\ \hline
$xxxyy$ & 1 & 1 & 1 & 1 \\
$xxyxy$ & 2 & 2 & 2 & 2 \\
$xxyyx$ & 2 & 1 & 1 & 2 \\
$xyxxy$ & 2 & 2 & 2 & 2 \\
$xyyxx$ & 2 & 1 & 1 & 2 \\
$xyxyx$ & 3 & 2 & 2 & 3 \\
$yxxxy$ & 1 & 2 & 1 & 2 \\
$yxxyx$ & 2 & 2 & 2 & 2 \\
$yxyxx$ & 2 & 2 & 2 & 2 \\
$yyxxx$ & 1 & 1 & 1 & 1 \\ \hline
\end{tabular}
\end{center}
\noindent  Therefore, The joint pmf of  $R_m$ and  $R_M$ has the following form:

\begin{center}
\begin{tabular}{c|ccc|c}
$\left( s,t\right) $ & 1 & 2 & 3 & $P\left( R_{m}=s\right) $ \\ \hline
1 & 0.2 & 0.3 & 0.0 & 0.5 \\
2 & 0.0 & 0.4 & 0.1 & 0.5 \\\hline
$P\left( R_{M}=t\right)$ & 0.2 & 0.7 & 0.1 & 1.0\\ 
\end{tabular}
\end{center}
Then, we have,
$$E(R_m)=1.5,  \quad  Var(R_m)=0.25, \quad   E(R_M)=1.9,  \quad  Var(R_M)=0.29,   \quad  cov(R_m,R_M)=0.15,$$
which agree with the results of Theorems \ref{thm.EVM}, \ref{thm.EVm}, and \ref{thm.cov}.
\end{example}

\begin{example}
Here, we considered different values of $n_1$ and $n_2$, and obtained the joint
pmf of $R_m$ and $R_M$.  The marginal pmf's of these statistics
are given in Table \ref{tab.1}. We also calculated the expectation, variance, and covariance of $R_m$ and $R_M$ directly using these pmf's. These are agree with the results of Theorems \ref{thm.EVM}, \ref{thm.EVm}, and \ref{thm.cov}.
\end{example}

\begin{table}[h]
\begin{center}
\caption{The marginal probability mass function, expectation, variance, and covariance of $R_m$ and $R_M$ for given $n_1$ and $n_2$}\label{tab.1}
\begin{tabular}{|c|c|c|c|c|c|c|c|c|c|c|} \hline
 & \multicolumn{10}{|c|}{$(n_1,n_2)$} \\ \hline
 & \multicolumn{2}{|c|}{3,3} & \multicolumn{2}{|c|}{12,3} & \multicolumn{2}{|c|}{10,5} & \multicolumn{2}{|c|}{8,7} & \multicolumn{2}{|c|}{9,9} \\ \hline
$i$ & $R_m$ & $R_M$ & $R_m$ & $R_M$ & $R_m$ & $R_M$ & $R_m$ & $R_M$ & $R_m$ & $R_M$ \\ \hline
1 & 0.3 & 0.1 & 0.033 & 0.004 & 0.005  & 0.001  & 0.002  & 0.000  & 0.000 & 0.000 \\
2 & 0.6 & 0.6 & 0.363 & 0.125 & 0.090  & 0.028 & 0.049 & 0.015 & 0.012 & 0.003 \\ \
3 & 0.1 & 0.3 & 0.604 & 0.508 & 0.360  & 0.210 & 0.245 & 0.134 & 0.097 & 0.041 \\ \
4 &  &  &  & 0.363 & 0.420  & 0.440 & 0.408 & 0.364 & 0.290 & 0.194 \\ \
5 &  &  &  &  & 0.126 & 0.280 & 0.245 & 0.354 & 0.363 & 0.363 \\
6 &  &  &  &  &  & 0.042 & 0.049 & 0.121 & 0.194 & 0.290 \\
7 &  &  &  &  &  &  & 0.002 & 0.012 & 0.041 & 0.097 \\
8 &  &  &  &  &  &  &  & 0.000 & 0.003 & 0.012 \\
9 &  &  &  &  &  &  &  &  & 0.000 & 0.000 \\ \hline
\multicolumn{1}{|l|}{Expectation} & 1.80 & 2.20 & 2.571 & 3.228 & 3.571 & 4.095 & 4.000 & 4.466 & 4.764 & 5.235 \\ \hline
\multicolumn{1}{|l|}{Variance}  & 0.36 & 0.36 & 0.310 & 0.453 & 0.706 & 0.767 & 0.923 & 0.925 & 1.121 & 1.121 \\ \hline
\multicolumn{1}{|l|}{Covariance} & \multicolumn{2}{|c|}{0.24} & \multicolumn{2}{|c|}{0.269} & \multicolumn{2}{|c|}{0.612} & \multicolumn{2}{|c|}{0.800} & \multicolumn{2}{|c|}{0.996} \\ \hline
\end{tabular}
\end{center}
\end{table}

\section{Conclusion}
\label{sec.appen}
The theory of runs has been studied in many literature and has various applications such as testing the identically distributed and randomness hypotheses. In this article, we consider the order statistics of runs, i.e. maximum and minimum of two runs, and find their pmf's. Computing the means and variances of these nonparametric statistics by using their pmf's require tedious calculations. Therefore, the conditional expectation method are used to solve this problem.

\section*{Acknowledgments}
The authors are thankful to the Editor and referees for helpful comments and suggestions.

\section*{Appendix}

\subsection*{A.1. Proof of Theorem \ref{thm.EVM} }
Using Lemmas \ref{lemma.P} and \ref{lemma.E}, we have
\begin{eqnarray*}
E\left(R_M\right)& =&E\left(R_M{\rm |}R_{{\rm 1}}{\rm >}R_{{\rm 2}}\right)P\left(R_{{\rm 1}}{\rm >}R_{{\rm 2}}\right){\rm +}E\left(R_M{\rm |}R_{{\rm 1}}{\rm <}R_{{\rm 2}}\right)P\left(R_{{\rm 1}}{\rm <}R_{{\rm 2}}\right)\\
&&+E\left(R_M{\rm |}R_{{\rm 1}}{\rm =}R_{{\rm 2}}\right)P\left(R_{{\rm 1}}{\rm =}R_{{\rm 2}}\right)\\
&=&\left({\rm 2+}\frac{\left(n_{{\rm 1}}{\rm -}{\rm 2}\right){\rm (}n_{{\rm 2}}{\rm -}{\rm 1)}}{n{\rm -}{\rm 2}}\right){\rm \times }\frac{n_1\left(n_1-1\right)}{n\left(n-1\right)}{\rm +}\left({\rm 2+}\frac{{\rm (}n_{{\rm 1}}{\rm -}{\rm 1)}\left(n_{{\rm 2}}{\rm -}{\rm 2}\right)}{n{\rm -}{\rm 2}}\right) \\
&& \times \frac{n_{{\rm 2}}\left(n_{{\rm 2}}{\rm -}{\rm 1}\right)}{n\left(n{\rm -}{\rm 1}\right)}{\rm +\ }\left({\rm 1+}\frac{\left(n_{{\rm 1}}{\rm -}{\rm 1}\right){\rm (}n_{{\rm 2}}{\rm -}{\rm 1)}}{n{\rm -}{\rm 2}}\right){\rm \times }\frac{{\rm 2}n_{{\rm 1}}n_{{\rm 2}}}{n\left(n{\rm -}{\rm 1}\right)}\\
&=&{\rm 2+}\frac{\left(n_{{\rm 1}}{\rm +}n_{{\rm 2}}\right)\left(n_{{\rm 1}}{\rm -}{\rm 1}\right)\left(n_{{\rm 2}}{\rm -}{\rm 1}\right){\rm -}{\rm 2}n_{{\rm 1}}n_{{\rm 2}}}{\left(n_{{\rm 1}}{\rm +}n_{{\rm 2}}\right)\left(n_{{\rm 1}}{\rm +}n_{{\rm 2}}{\rm -}{\rm 1}\right)}.
\end{eqnarray*}
Also,
\begin{eqnarray*}
Var\left( R_{M}\right) &=&Var\left( R_{M}|R_{1}>R_{2}\right) P\left(
R_{1}>R_{2}\right) +Var\left(R_{M}|R_{1}<R_{2}\right) P\left(
R_{1}<R_{2}\right) \\
&&+Var\left( R_{M}|R_{1}=R_{2}\right) P\left( R_{1}=R_{2}\right) +(E(R_{M}|R_{1}>R_{2}))^{2}
P\left(R_{1}>R_{2}\right) \\
&&+(E(R_{M}|R_{1}<R_{2}))^{2} P\left(R_{1}<R_{2}\right)+(E(R_{M}|R_{1}=R_{2}))^{2}
P\left( R_{1}=R_{2}\right) \\
&&-(E(R_{M}))^{2}.
\end{eqnarray*}
So, by substituting appropriate formulas, we have
\begin{eqnarray*}
Var(R_{M}) &=&\frac{n_{2}(n_{2}-1)(n_{1}-2)(n_{1}-1)}{(n-2)^{2}(n-3)}\times
\frac{n_{1}(n_{1}-1)}{n(n-1)} \\
&&+\frac{n_{1}(n_{1}-1)(n_{2}-2)(n_{2}-1)}{(n-2)^{2}(n-3)}\times \frac{%
n_{2}(n_{2}-1)}{n(n-1)} \\
&&+\frac{(n_{2}-1)^{2}(n_{1}-1)^{2}}{(n-2)^{2}(n-3)}\times \frac{2n_{1}n_{2}%
}{n(n-1)} \\
&&+\left( 2+\frac{(n_{1}-2)(n_{2}-1)}{n-2}\right) ^{2}\times \frac{%
n_{1}(n_{1}-1)}{n(n-1)} \\
&&+\left( 2+\frac{(n_{1}-1)(n_{2}-2)}{n-2}\right) ^{2}\times \frac{%
n_{2}(n_{2}-1)}{n(n-1)} \\
&&+\left( 1+\frac{(n_{1}-1)(n_{2}-1)}{n-2}\right) ^{2}\times \frac{%
2n_{1}n_{2}}{n(n-1)} \\
&&-\left( 2+\frac{n(n_{1}-1)(n_{2}-1)-2n_{1}n_{2}}{n\left( n-1\right)}
\right) ^{2}.
\end{eqnarray*}
Then, the proof is completed through some tedious calculations.

\subsection*{A.2. Proof of Theorem \ref{thm.EVm} }
By using the mean of the total number of runs, $R$, which was obtained by Marshall
\cite{marshall-70},
we have
$E\left(R\right)= 1+\frac{{\rm 2}n_{{\rm 1}}n_{{\rm 2}}}{n}$.
Also,
$E\left(R\right)=E\left(R_{{\rm M}}\right)+E\left(R_m\right)$.
So, by using the first part of Theorem
\ref{thm.EVM}, we have
\[E\left(R_m\right)=\frac{n_1n_2}{n-1}.\]

For the second part of the theorem,   similar to Theorem \ref{thm.EVM} and using the lemmas \ref{lem.Rm1} and \ref{lem.Rm2},
we have
\begin{eqnarray*}
Var\left(R_M\right)&=&\frac{n_{{\rm 1}}\left(n_{{\rm 1}}{\rm -}{\rm 1}\right)\left(n_{{\rm 2}}{\rm -}{\rm 2}\right)\left(n_{{\rm 2}}{\rm -}{\rm 1}\right)}{{\left(n{\rm -}{\rm 2}\right)}^{{\rm 2}}\left(n{\rm -}{\rm 3}\right)}{\rm \times }\frac{n_1\left(n_1-1\right)}{n\left(n-1\right)}\\
&&+\frac{n_{{\rm 2}}\left(n_{{\rm 2}}{\rm -}{\rm 1}\right)\left(n_{{\rm 1}}{\rm -}{\rm 2}\right)\left(n_{{\rm 1}}{\rm -}{\rm 1}\right)}{{\left(n{\rm -}{\rm 2}\right)}^{{\rm 2}}\left(n{\rm -}{\rm 3}\right)}{\rm \times \ }\frac{n_{{\rm 2}}\left(n_{{\rm 2}}{\rm -}{\rm 1}\right)}{n\left(n{\rm -}{\rm 1}\right)}
\\
&&{\rm +\ }\frac{{\left(n_{{\rm 2}}{\rm -}{\rm 1}\right)}^{{\rm 2}}{\left(n_{{\rm 1}}{\rm -}{\rm 1}\right)}^{{\rm 2}}}{{\left(n{\rm -}{\rm 2}\right)}^{{\rm 2}}\left(n{\rm -}{\rm 3}\right)}{\rm \times }\frac{{\rm 2}n_{{\rm 1}}n_{{\rm 2}}}{n\left(n{\rm -}{\rm 1}\right)}\\
&&+{\left({\rm 2+}\frac{{\rm (}n_{{\rm 1}}{\rm -}{\rm 1)}\left(n_{{\rm 2}}{\rm -}{\rm 2}\right)}{n{\rm -}{\rm 2}}\right)}^2\times \frac{n_1\left(n_1-1\right)}{n\left(n-1\right)}\\
&&+{\left({\rm 2+}\frac{\left(n_{{\rm 1}}{\rm -}{\rm 2}\right)\left(n_{{\rm 2}}{\rm -}{\rm 1}\right)}{n{\rm -}{\rm 2}}\right)}^2\times \frac{n_{{\rm 2}}{\rm (}n_{{\rm 2}}{\rm -}{\rm 1)}}{n\left(n{\rm -}{\rm 1}\right)}\\
&&+{\left({\rm 1+}\frac{\left(n_{{\rm 1}}{\rm -}{\rm 1}\right)\left(n_{{\rm 2}}{\rm -}{\rm 1}\right)}{n{\rm -}{\rm 2}}\right)}^2\times \frac{{\rm 2}n_{{\rm 1}}n_{{\rm 2}}}{n\left(n{\rm -}{\rm 1}\right)}-{\left(\frac{n_1n_2}{n-1}\right)}^2.
\end{eqnarray*}
The proof is completed through some tedious calculations.



\end{document}